\documentclass[a4paper]{article}
\pdfoutput=1
\usepackage{onecolpceurws}
\institution{}

\usepackage[english]{babel}
\usepackage[utf8]{inputenc}

\usepackage{cmbright} 
\usepackage[T1]{fontenc}
\usepackage{lmodern}

\pdfsuppresswarningpagegroup=1

\usepackage{graphicx} 
\usepackage{wrapfig}
\graphicspath{{.}{pics/}} 

\usepackage{amsmath}
\usepackage{amssymb}
\usepackage{amsthm}
\newtheorem{theorem}{Theorem}[section]

\newtheorem{corollary}[theorem]{Corollary}

\newtheorem{lemma}[theorem]{Lemma}
\newtheorem{remark}[theorem]{Remark}
\providecommand{\keywords}[1]{\textbf{\textit{Keywords: }} #1}

\usepackage{cite}
\usepackage{xcolor}
\usepackage{hyperref}
\definecolor{darkgreen}{rgb}{0,0.4,0}
\definecolor{BrickRed}{rgb}{0.65,0.08,0}
\hypersetup{colorlinks=true,linkcolor=blue,citecolor=darkgreen,filecolor=BrickRed,urlcolor=darkgreen}

\newcommand{\PR}{\mathbb{P}}

\newcommand{\Nc}{\mathcal{N}} 

\newcommand{\Cat}{\text{Cat}}
 
\newcommand{\oeis}[1]{\text{\href{https://oeis.org/#1}{{\small \tt OEIS #1}}}}

\def\IdP{{\bf 1}_\P}
\def\Pol{\operatorname{Polyo}}

\usepackage{graphicx}
\usepackage{calc}

\begin{document}

\pagestyle{empty}

\author{Cyril Banderier\\ LIPN (UMR CNRS 7030), Universit\'e de Paris Nord, France\\\url{http://lipn.fr/~banderier/}\\ 
\includegraphics[width=3mm]{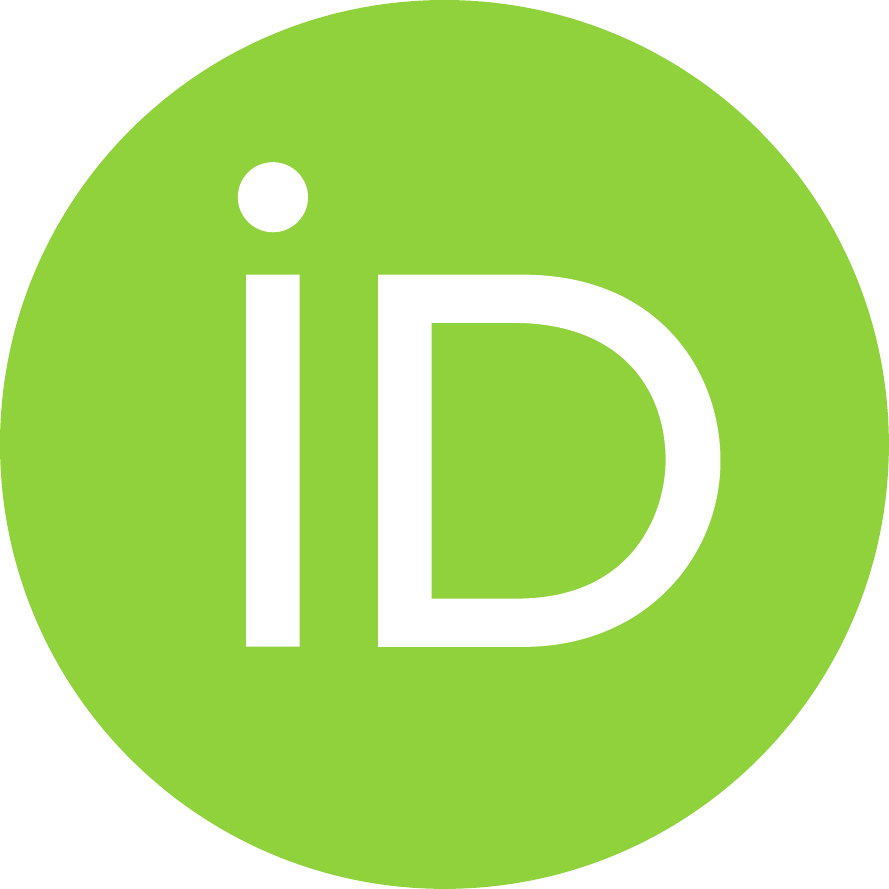} \url{https://orcid.org/0000-0003-0755-3022} 
\and \\ Philippe Marchal \\ LAGA  (UMR CNRS 7539), Universit\'e de Paris Nord, France\\\url{https://www.math.univ-paris13.fr/~marchal}\\ 
\includegraphics[width=3mm]{orcid.pdf} \url{https://orcid.org/0000-0001-8236-5713}
\and \\  Michael Wallner\\ LaBRI (UMR CNRS 5800), Universit\'e de Bordeaux, France\\\url{http://dmg.tuwien.ac.at/mwallner/} \\   
\includegraphics[width=3mm]{orcid.pdf} \url{https://orcid.org/0000-0001-8581-449X} }

\title{Rectangular Young tableaux with local decreases and the density method for uniform random generation (short version)}
\date{March 25, 2018}
\maketitle

\begin{abstract}
In this article, we consider a generalization of Young tableaux in which we allow some consecutive pairs of cells with decreasing labels.  
We show that this leads to a rich variety of combinatorial formulas,
which suggest that these new objects could be related to deeper structures, similarly to the ubiquitous Young tableaux.

Our methods rely on variants of hook-length type formulas, and also on a new efficient generic method (which we call the density method)
which allows not only to generate constrained combinatorial objects, but also to enumerate them. 
We also investigate some repercussions of this method on the  D-finiteness 
of the generating functions of combinatorial objects encoded by linear extension diagrams,
and give a limit law result for the average number of local decreases.
\end{abstract}

\keywords{Young tableau, analytic combinatorics, generating function, D-finite function, binomial numbers, random generation, density method, linear extensions of posets}

\newpage
\pagestyle{plain}
\section{Introduction}

As predicted by Anatoly Vershik in~\cite{Vershik01},  
the 21st century should see a lot of challenges and advances on the links of probability theory  
with (algebraic) combinatorics.  
A key role is played here by Young tableaux, because of their ubiquity in representation theory~\cite{Macdonald15} and 
in algebraic combinatorics, as well as their relevance in many other different fields (see e.g.~\cite{Stanley11}).

Young tableaux are tableaux with $n$ cells labelled from $1$ to $n$, 
with the additional constraint that these labels increase among each row and each column (starting from the lower left cell).
Here we consider the following variant:
What happens if we allow  exceptionally some consecutive cells with decreasing labels?
Does this variant lead to nice formulas if these local decreases are regularly placed?
Is it related to other mathematical objects or theorems?  How to generate them?
This article gives some answers to these questions.

As illustrated in Figure~\ref{fig1},
we put a bold red edge between the cells which are allowed to be decreasing.
Therefore these two adjacent cells can have decreasing labels (like $19$ and $12$ in the top row of Figure~\ref{fig1},
or $11$ and $10$ in the untrustable Fifth column), 
or as usual increasing labels (like $13$ and $15$ in the bottom row of Figure~\ref{fig1}). We call these bold red edges ``walls''.
\vspace{-1mm}
\begin{figure}[h!]
		\begin{center}	
			\includegraphics[width=.35\textwidth]{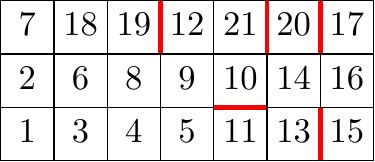}
			\caption{We consider Young tableaux in which some pairs of (horizontally or vertically) consecutive
cells are allowed to have decreasing labels. Such places where a decrease is allowed 
(but not compulsory) are drawn by a bold red edge, which we call a ``wall''.
}
			\label{fig1}
		\end{center}
\end{figure}

For Young tableaux of shape\footnote{We will refer to ``$n\times m$ Young tableaux'', or ``Young tableaux of 
shape $n\times m$'', for rectangular Young tableaux with $n$ rows and $m$ columns. They are trivially in bijection 
with $m\times n$ Young tableaux.}
 $n \times 2$
several cases lead directly to nice enumerative formulas for the total number of specific tableaux with $2n$ cells:
\begin{enumerate}
  \setlength\itemsep{0em}
	\item Walls everywhere: $(2n)!$
	\item Horizontal walls everywhere: $\frac{(2n)!}{2^n}$
	\item Horizontal walls everywhere in left (or right) column: $(2n-1)!! = \frac{(2n)!}{2^n n!}$
	\item Vertical walls everywhere: $\binom{2n}{n} = \frac{(2n)!}{(n!)^2}$
	\item No walls: $\frac{1}{n+1}{\binom{2n}{n}} = \frac{(2n)!}{(n+1) (n!)^2}$
\end{enumerate}	

In this article we are interested in the enumeration and the generation of Young tableaux (of different rectangular shapes) 
with such local decreases, and we investigate to which other mathematical notions they are related.
Section~2 focuses on the case of horizontal walls: We give a link with the Chung--Feller Theorem, binomial numbers
and a Gaussian limit law.
Section~3 focuses on the case of vertical walls: We give a link with hook-length type formulas.
Section~4 presents a generic method, which allows us to enumerate many variants of Young tableaux (or more generally, 
linear extensions of posets), and which also offers an efficient uniform random generation algorithm, and links with D-finiteness.

\pagebreak
\section{Vertical walls, Chung--Feller and binomial numbers}

\setlength{\columnsep}{20pt}%
\begin{wrapfigure}{R}{0.2\textwidth}%
	\begin{center}
		\includegraphics[width=.12\textwidth]{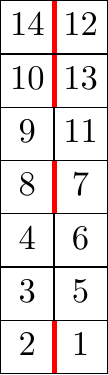}
		\caption{Example of one of our $n\times 2$ Young tableaux with walls.}
		\label{fig:2colhor}%
	\end{center}%
\end{wrapfigure}

In this section we consider a family of Young tableaux 
having some local decreases at places indicated by vertical walls, see Figure~\ref{fig:2colhor}. 

\begin{theorem}
	The number of  $n\times 2$ Young tableaux  with $k$ vertical walls is equal to 
	\begin{align*}
		v_{n,k} &= \frac{1}{n+1-k} {\binom{n}{k}} {\binom{2n}{n}}.
	\end{align*}
\end{theorem}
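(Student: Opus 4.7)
The plan is to separate the wall placement from the underlying column-increasing labelling, appeal to a Chung--Feller-type equidistribution, and finish with a hockey-stick summation. First I would encode the data: a labelling of the $n \times 2$ grid with strictly increasing columns is determined by the $\binom{2n}{n}$-many subsets that specify which labels lie in the left column. To such a labelling $T$ I attach the set $D(T)\subseteq\{1,\dots,n\}$ of rows on which the left entry exceeds the right entry. A placement of $k$ walls at a size-$k$ subset $S\subseteq\{1,\dots,n\}$ is compatible with $T$ if and only if $D(T)\subseteq S$; summing over $T$ and admissible $S$ yields
\begin{equation*}
v_{n,k} \;=\; \sum_{T} \binom{n-|D(T)|}{k-|D(T)|} \;=\; \sum_{d=0}^{n} N_d\binom{n-d}{k-d},
\end{equation*}
where $N_d$ is the number of column-increasing labellings with exactly $d$ decreasing rows.

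The crux is the Chung--Feller-type identity $N_d = \Cat_n = \tfrac{1}{n+1}\binom{2n}{n}$ for every $d=0,1,\dots,n$. Under the bijection ``label $i$ lies in the left column $\Leftrightarrow$ step $i$ is $+1$'', column-increasing labellings correspond to $\pm1$ lattice paths from $(0,0)$ to $(2n,0)$, and row $j$ is decreasing precisely when the $j$-th down-step precedes the $j$-th up-step, equivalently when that down-step starts at non-positive height. A short level-crossing argument then shows that $2d$ equals the number of edges of the path lying weakly below the $x$-axis, which is the classical Chung--Feller statistic, equidistributed across the values $0,2,\dots,2n$ with each value achieved by exactly $\Cat_n$ paths.

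The main obstacle is to argue this equidistribution cleanly: a naive ``swap the $j$-th up-step with the $j$-th down-step whenever they are in the wrong order'' map is not bijective because simultaneous swaps interact (the identity of the $i$-th up/down letter depends on other swaps). The clean remedy is to invoke a cycle-lemma argument in the style of Dvoretzky--Motzkin, or to cite the classical Chung--Feller theorem directly. Granting the claim, the hockey-stick identity $\sum_{d=0}^{k}\binom{n-d}{k-d}=\binom{n+1}{k}$ yields
\begin{equation*}
v_{n,k} \;=\; \Cat_n\binom{n+1}{k} \;=\; \frac{\binom{2n}{n}}{n+1}\cdot\frac{(n+1)!}{k!\,(n+1-k)!} \;=\; \frac{1}{n+1-k}\binom{n}{k}\binom{2n}{n},
\end{equation*}
which is the claimed formula.
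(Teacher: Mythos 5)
Your proposal is correct and follows essentially the same route as the paper: encode the column-increasing labelling as a bilateral lattice path, identify the decreasing rows with the down-steps below the $x$-axis, invoke the Chung--Feller equidistribution to get $\Cat_n$ labellings for each number $d$ of decreasing rows, and sum $\sum_{d}\binom{n-d}{k-d}\Cat_n=\binom{n+1}{k}\Cat_n$. The paper phrases the choice of walls as a red/blue colouring of down-steps rather than as counting wall-sets $S\supseteq D(T)$, but this is only a presentational difference.
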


\begin{proof}
	We apply a bijection between two-column Young tableaux of size $2n$ with $k$ walls and Dyck paths without the positivity constraint of length $2n$ and $k$ coloured down steps. These paths start at the origin, end on the $x$-axis and are composed out of up steps $(1,1)$,  and coloured down steps $(1,-1)$ which are either red or blue. 
	
	Given an arbitrary two-column Young tableau, the $m$-th step of the associated path is an up step if the entry $m$ appears in the left column, while the $m$-th step is a down step, if the $m$-th entry appears in the right column. Furthermore, we associate colours to the down steps: If the $m$-th down step is in a row with a wall we colour it red, and blue otherwise. 
	
	Thus, $v_{n,k}$ counts the number of paths with exactly $k$ red down steps. Note that the down steps of a path below the $x$-axis are always red because a wall has to be involved, yet above the $x$-axis down steps can have any colour. 
	We decompose paths with $k$ coloured down steps with respect to the number of steps which are below the $x$-axis. By the Chung--Feller Theorem~\cite{ChungFeller49} (see also~\cite{Chen08} for a bijective proof) the number of Dyck paths of length $2n$ with $i$ down steps below the $x$-axis is independent of $i$ and equal to the Catalan number $\Cat_n = \frac{1}{n+1}{\binom{2n}{n}}$. When $i$ steps are below the $x$-axis we have to colour $k-i$ of the remaining $n-i$ steps above the $x$-axis red. This gives
	\begin{align*}
		v_{n,k} &= \sum_{i=0}^k  \binom{n-i}{k-i} \Cat_n=	\binom{n+1}{k} \Cat_n ,
	\end{align*}		
	and the claim follows.
\end{proof}

As a simple consequence, we get the following result.

\begin{corollary}
The average number of linear extensions of a random $n\times 2$ Young tableau with $k$ walls, where the
location of these walls is chosen uniformly at random, is
\begin{align*}
	\frac{1}{n+1-k} {\binom{2n}{n}}.
\end{align*}
\end{corollary}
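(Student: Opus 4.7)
My plan is a one-line double-counting argument. First, I would observe that a wall configuration on an $n \times 2$ tableau is specified by the subset of the $n$ rows in which a vertical wall appears, so under uniform random placement of $k$ walls there are exactly $\binom{n}{k}$ equiprobable configurations.

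Next, the key observation is that the quantity $v_{n,k}$ in Theorem~2.1 counts pairs (wall placement, filling), not fillings for a single fixed configuration. This is visible from the bijection in its proof: the shape of the Dyck path encodes the filling while the red/blue coloring of the down steps records which rows carry walls. Writing $L_\sigma$ for the number of valid labellings (i.e.\ linear extensions of the associated poset) of the tableau whose walls are placed at the positions $\sigma \subseteq \{1,\ldots,n\}$, this reading of the bijection gives
$$v_{n,k} = \sum_{|\sigma| = k} L_\sigma.$$

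The rest is immediate: taking the uniform average over $\sigma$ and substituting the closed form from Theorem~2.1 yields
$$\E[L_\sigma] \;=\; \frac{v_{n,k}}{\binom{n}{k}} \;=\; \frac{1}{\binom{n}{k}} \cdot \frac{1}{n+1-k}\binom{n}{k}\binom{2n}{n} \;=\; \frac{1}{n+1-k}\binom{2n}{n}.$$

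The only point worth checking, rather than a real obstacle, is the interpretation of $v_{n,k}$ as counting pairs rather than labellings for a fixed wall pattern. Once that is accepted the corollary is essentially a restatement of Theorem~2.1 with the $\binom{n}{k}$ factor corresponding to the choice of wall positions stripped off; there is nothing further to prove.
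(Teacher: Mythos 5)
Your argument is correct and is exactly the paper's proof: the paper likewise notes that there are $\binom{n}{k}$ ways to place the $k$ walls and divides $v_{n,k}$ by this count, relying on the same reading of Theorem~2.1 as counting (wall placement, filling) pairs. Your write-up merely makes that interpretation explicit.
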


\begin{proof}
	In a two-column Young tableau of size $2n$ we have $\binom{n}{k}$ possibilities to add $k$ walls.
\end{proof}

We now conclude this section with a limit law result.
\begin{theorem}
Let $X_n$ be the random variable for the number of walls in a random $n \times 2$ Young tableau chosen uniformly at random. 
	The rescaled random variable $\frac{X_n-n/2}{\sqrt{n/4}}$ converges to the standard normal distribution $\Nc(0,1)$.
\end{theorem}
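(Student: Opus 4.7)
The approach is to read off the exact distribution of $X_n$ from Theorem~2.1, recognize it as a slightly truncated symmetric binomial, and apply the classical central limit theorem.

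First, I would use the identity $v_{n,k} = \binom{n+1}{k}\,\Cat_n$ established inside the proof of Theorem~2.1. Summing over $k=0,1,\ldots,n$ yields the total number of $n\times 2$ Young tableaux with walls,
\[
T_n \;=\; \Cat_n \sum_{k=0}^{n} \binom{n+1}{k} \;=\; (2^{n+1}-1)\,\Cat_n,
\]
so that
\[
\PR(X_n=k) \;=\; \frac{\binom{n+1}{k}}{2^{n+1}-1}, \qquad 0\le k\le n.
\]

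Next, I would observe that this is precisely the law of a binomial variable $Y_n\sim\mathrm{Bin}(n+1,\tfrac12)$ conditioned on the event $\{Y_n\ne n+1\}$. Since $\PR(Y_n=n+1)=2^{-(n+1)}$ decays exponentially, the total variation distance between $X_n$ and $Y_n$ is $\LandauO(2^{-n})$, so the two sequences share the same weak limit under any reasonable rescaling.

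Finally, by the de Moivre--Laplace theorem, $(Y_n-(n+1)/2)/\sqrt{(n+1)/4}$ converges in distribution to $\Nc(0,1)$. I would then pass to the centering and scaling in the statement via Slutsky's theorem, using the elementary decomposition
\[
\frac{X_n - n/2}{\sqrt{n/4}} \;=\; \frac{X_n - (n+1)/2}{\sqrt{(n+1)/4}}\cdot\sqrt{\tfrac{n+1}{n}} \;+\; \frac{1}{\sqrt{n}},
\]
together with $\sqrt{(n+1)/n}\to 1$ and $1/\sqrt{n}\to 0$. I do not foresee a serious obstacle: the argument reduces cleanly to a binomial CLT plus a Slutsky adjustment, and the only point to double-check is that the exponentially small mass removed by the conditioning is indeed negligible, which is manifest from the $2^{-(n+1)}$ estimate.
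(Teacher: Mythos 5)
Your proposal is correct and follows essentially the same route as the paper: sum $v_{n,k}=\binom{n+1}{k}\Cat_n$ to get the normalizing constant $(2^{n+1}-1)\Cat_n$, identify $\PR(X_n=k)=\binom{n+1}{k}/(2^{n+1}-1)$ as a (barely truncated) symmetric binomial, and invoke the binomial central limit theorem. The paper dismisses the truncation and the shift from $n+1$ to $n$ as a ``slight variation''; your total-variation bound and Slutsky adjustment simply make those two steps explicit.
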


\begin{proof}
	We see that the total number of two-column Young tableaux of size $n$ with walls is equal to 
	\begin{align*}
		\sum_{k=0}^n v_{n,k} &= \Cat_n \left(2^{n+1}-1\right).
	\end{align*}
	Then, the previous results show that
	\begin{align*}
		\PR\left(X_n = k\right) &= \binom{n+1}{k} \frac{1}{2^{n+1}-1},
	\end{align*}
	which is a slight variation of a binomial distribution with parameters $n+1$ and probability $1/2$. 
By the well-known convergence of the rescaled binomial distribution to a normal distribution the claim holds (see e.g.~\cite{FlajoletSedgewick09}).
\end{proof}

\section{Horizontal walls  and the hook-length formula}
The hook-length formula is a well-known  formula to enumerate standard Young tableaux of a given shape (see e.g.~\cite{Macdonald15,Stanley11}).
What happens if we add walls in these tableaux?
Let us first consider the case of a Young tableau of size $n$ such 
that its walls cut the corresponding tableau into $m$ disconnected parts without walls of 
size $k_1, \dots, k_m$  (e.g., some walls form  a full horizontal or vertical line).
Then, the number of fillings of such a tableau is trivially:
$$\frac{n!}{k_1! \dots k_m!} \prod_{i=1}^m \text{HookLengthFormula(subtableau of size $k_i$)}.$$
So in the rest of article, we focus on walls which 
are not trivially splitting the problem into subproblems: They are the only cases for which the enumeration (or the random generation)
is indeed challenging.

We continue our study with families of Young tableaux of shape $m \times n$
 having some local decreases at places indicated by horizontal walls in the left column.
We will need the following lemma counting special fillings of Young tableaux.
\begin{lemma}
	\label{lem:horwalls}
	The number of $n \times 2$ ``Young tableaux'' with $2\lambda$ cells filled with the numbers $1,2,\ldots, 2n$ for $n \geq \lambda$ such that (the number $2n$ is used and) all consecutive numbers between the minimum of the second column and $2n$ are used is equal to
	\begin{align}
		\label{eq:lemhorwalls}
		\binom{2n}{\lambda} - \binom{2n}{\lambda-1}.
	\end{align}
\end{lemma}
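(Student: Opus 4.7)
The plan is to encode each tableau as a word $w \in \{L, R, *\}^{2n}$: $w_i = L$ (resp.\ $R$) if the integer $i$ lies in the first (resp.\ second) column, and $w_i = *$ if $i$ is unused. The Young tableau condition becomes the prefix inequality $\#\{k \le j : w_k = L\} \ge \#\{k \le j : w_k = R\}$ for all $j$, and the lemma's hypotheses translate into $w_{2n} = R$ together with the requirement that every $*$ precede the first $R$. I would then condition on the parameter $h \in \{1, 2, \dots, \lambda\}$, the number of $L$'s appearing before the first $R$; since the initial segment must accommodate exactly $h$ $L$'s and all $2n - 2\lambda$ $*$'s, this forces the first $R$ to sit at position $r_1 = h + (2n - 2\lambda) + 1$.

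For fixed $h$ the count factorises as $\binom{h + 2n - 2\lambda}{h}$ (interleavings of the $L$'s and $*$'s in the initial segment) times the number of admissible arrangements of the remaining $\lambda - h$ $L$'s and $\lambda - 1$ $R$'s in the tail. Identifying $L \mapsto +1$ and $R \mapsto -1$, the Young condition on the tail becomes the requirement that the associated lattice path, starting at height $h - 1$ and ending at height $0$, stay non-negative; by the reflection principle across the line $y = -1$ there are $\binom{2\lambda - h - 1}{\lambda - h} - \binom{2\lambda - h - 1}{\lambda}$ such paths. The total count is therefore
\begin{equation*}
\sum_{h = 1}^{\lambda} \binom{h + 2n - 2\lambda}{h} \left[ \binom{2\lambda - h - 1}{\lambda - h} - \binom{2\lambda - h - 1}{\lambda} \right].
\end{equation*}

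Both inner sums are negative-binomial convolutions of Vandermonde type: they are evaluated by a single coefficient extraction in a product of powers of $(1 - u)^{-1}$, yielding $\sum_{h \ge 0} \binom{h + 2n - 2\lambda}{h} \binom{2\lambda - h - 1}{\lambda - h} = \binom{2n}{\lambda}$ and $\sum_{h \ge 0} \binom{h + 2n - 2\lambda}{h} \binom{2\lambda - h - 1}{\lambda} = \binom{2n}{\lambda - 1}$. In each sum the $h = 0$ contribution equals $\binom{2\lambda - 1}{\lambda}$, and these boundary terms cancel in the difference, leaving exactly $\binom{2n}{\lambda} - \binom{2n}{\lambda - 1}$. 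The main obstacle is the tail-path count: one has to apply the reflection principle with a non-zero starting height $h - 1$, which is a small book-keeping nuisance; once this count is in hand, the remainder of the argument is routine combinatorial algebra.
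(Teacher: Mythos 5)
Your proof is correct: the three-letter encoding, the forced position of the first $R$, the reflection count $\binom{2\lambda-h-1}{\lambda-h}-\binom{2\lambda-h-1}{\lambda}$ for the tail started at height $h-1$, and the two Vandermonde-type evaluations (via coefficient extraction in $(1-u)^{-(2n-2\lambda+1)}\cdot(1-u)^{-\lambda}$ and $(1-u)^{-(2n-2\lambda+1)}\cdot(1-u)^{-(\lambda+1)}$) all check out, as does the cancellation of the equal $h=0$ boundary terms $\binom{2\lambda-1}{\lambda}$. However, you take a more laborious route than the paper. The paper's key observation is that, since every unused number is smaller than the minimum of the second column, the unused numbers can simply be adjoined to the tableau, turning the partial filling bijectively into a \emph{complete} standard filling of a two-column shape with column lengths $\lambda$ and $2n-\lambda$; this is then a single ballot-type path count (a path of $2n$ steps from height $2(n-\lambda)$ down to $0$ staying nonnegative), and one application of the reflection principle gives $\binom{2n}{\lambda}-\binom{2n}{\lambda-1}$ outright, with no summation. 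Your argument in effect rederives this by splitting according to the interface between the $*$/$L$ prefix and the balanced tail, so you pay for the missing completion bijection with a binomial convolution at the end. What your version buys is that it is entirely mechanical — no structural insight about absorbing the unused labels is needed, and the same conditioning-on-$h$ template would adapt to variants where the unused numbers are \emph{not} all below the second column's minimum; what the paper's version buys is a one-line proof and a cleaner combinatorial explanation of why the answer is a difference of two binomial coefficients.
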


\begin{proof}
	The constraint on the maximum implies that all not used numbers are smaller than the number in the bottom right cell. Therefore it is legitimate to add these numbers to the tableaux. 
In particular, we create a standard Young tableau of shape $(\lambda,2n-\lambda)$ (i.e., the first column has $\lambda$ cells and the second one $2n-\lambda$) which is in bijection with the previous tableau. 
	
	Next we build a bijection between standard Young tableaux of shape $(\lambda,2n-\lambda)$ and Dyck paths with up steps $(1,1)$ and down steps $(1,-1)$ starting at $(0,2(n-\lambda))$, always staying above the $x$-axis and ending on the $x$-axis after $2n$ steps. In particular, if the number $i$ appears in the left column, the $i$-th step is an up step, and if it appears in the right column, the $i$-th step is a down step. 
	
	Finally, note that these paths can be counted using the reflection principle~\cite{Andre87}. In particular, there are $\binom{2n}{\lambda}$ possible paths from $(0,2(n-\lambda))$ to $(2n,0)$. 
Yet, $\binom{2n}{\lambda-1}$ ``bad'' paths cross the $x$-axis at some point. This can be seen, by cutting such a path at the first time it reaches altitude $-1$. The remaining path is reflecting along the horizontal line $y=-1$ giving a path ending at $(2n,-2)$. It is easy to see that this is a bijection between bad paths from $(0,2(n-\lambda))$ to $(2n,0)$ and all paths from $(0,2(n-\lambda))$ to $(2n,-2)$. The latter is obviously counted by $\binom{2n}{\lambda-1}$, as $\lambda-1$ of the $2n$ steps have to be up steps.
\end{proof}

\begin{theorem}
	The number of $n \times 2$ Young tableaux of size $2n$ with $k$ walls in the first column at heights $0<h_i<n$, $i=1,\ldots,k$ with $h_i < h_{i+1}$ is equal to
	\begin{align*}
		\frac{1}{2n+1} \prod_{i=1}^{k+1} \binom{2 h_i + 1}{h_i - h_{i-1}},
	\end{align*}
	with $h_0 := 0$ and $h_{k+1} := n$.
\end{theorem}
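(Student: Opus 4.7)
The plan is to encode each walled tableau as a lattice word, translate the validity constraints into a nested Dyck condition on its subwords, and then run an induction on $k$ whose crux is a single extension lemma of ballot type.

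First, I would set up the bijection. To a walled tableau associate the word $w=w_1w_2\cdots w_{2n}$ over the alphabet $\{R,L_1,\dots,L_{k+1}\}$, where $w_m=R$ if the integer $m$ appears in the right column and $w_m=L_i$ if $m$ lies in the $i$-th block of the left column (rows $h_{i-1}+1,\dots,h_i$). Since the right column is globally increasing and each block of the left column is increasing, the tableau is recovered from $w$ by sorting. A short computation shows that the row constraints $L_r<R_r$ are equivalent to the following nested Dyck condition: for every $i\in\{1,\dots,k+1\}$, the subword obtained by keeping only the letters $L_1,\dots,L_i$ together with the first $h_i$ occurrences of $R$ is a Dyck word of length $2h_i$ (each $L_j$ read as an up-step and each $R$ as a down-step).

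Next, I would run an induction on $k$. The base case $k=0$ gives a single Dyck word of length $2n$, counted by the Catalan number $C_n=\frac{1}{2n+1}\binom{2n+1}{n}$, which matches the formula. For the inductive step, set $s_i:=h_i-h_{i-1}$ and note that the subword of level $k$ is itself a valid walled tableau of shape $h_k\times 2$ with walls at $h_1,\dots,h_{k-1}$; by the induction hypothesis it is counted by $\frac{1}{2h_k+1}\prod_{i=1}^{k}\binom{2h_i+1}{s_i}$. It then remains to count, for each fixed subword of level $k$, the number of ways to extend it to a Dyck word of length $2n$ by inserting $s_{k+1}$ copies of $L_{k+1}$ together with $s_{k+1}$ new letters $R$, placed so that the new $R$'s follow the $h_k$ old $R$'s in the final word.

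The crux is an extension lemma asserting that the number of extensions equals
\[
\frac{2h_k+1}{2n+1}\binom{2n+1}{s_{k+1}},
\]
independently of the chosen subword of level $k$; multiplying by the inductive count then telescopes to $\frac{1}{2n+1}\prod_{i=1}^{k+1}\binom{2h_i+1}{s_i}$ and closes the induction. To prove the lemma, I would classify the extensions by the integer $e\in\{0,1,\dots,s_{k+1}\}$ counting the new $L_{k+1}$'s placed strictly before the last old $R$: for each $e$, inserting them into the $2h_k$ slots of the subword preceding its final down-step contributes $\binom{2h_k+e-1}{e}$ patterns, and the remaining $s_{k+1}-e$ copies of $L_{k+1}$ together with the $s_{k+1}$ new $R$'s must form a lattice path of length $2s_{k+1}-e$ from height $e$ down to $0$ while staying nonnegative, counted by the ballot number $\frac{e+1}{2s_{k+1}-e+1}\binom{2s_{k+1}-e+1}{s_{k+1}-e}$. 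Summing on $e$ reduces the lemma to a binomial identity, which can be verified via the cycle lemma by reinterpreting the right-hand side as the number of lattice paths from $(0,0)$ to $(2n+1,2h_k+1)$ with strictly positive partial sums; establishing this identity cleanly is the main technical obstacle of the proof.
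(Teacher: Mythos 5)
Your induction has exactly the same skeleton as the paper's proof: cut at the topmost wall, count the bottom $h_k$ rows by the induction hypothesis, and show that the top part contributes a factor $\frac{2h_k+1}{2n+1}\binom{2n+1}{n-h_k}$ \emph{independently} of the bottom filling, so that the product telescopes. Your ``extension lemma'' is precisely the paper's Lemma~3.1 in disguise, since $\binom{2n}{\lambda}-\binom{2n}{\lambda-1}=\frac{2(n-\lambda)+1}{2n+1}\binom{2n+1}{\lambda}$ with $\lambda=n-h_k=s_{k+1}$. Where you genuinely diverge is in how this factor is established. The paper observes that the top part, once the unused (necessarily smaller) labels are adjoined, is equivalent to a standard Young tableau of two-column shape $(\lambda,2n-\lambda)$, i.e.\ a single nonnegative lattice path from height $2(n-\lambda)$ down to $0$, and the reflection principle gives $\binom{2n}{\lambda}-\binom{2n}{\lambda-1}$ in one stroke, with independence from the bottom part being automatic. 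You instead fix the bottom word and sum over the auxiliary parameter $e$, which leads to the convolution
\begin{equation*}
\sum_{e=0}^{s}\binom{2h+e-1}{e}\,\frac{e+1}{s+1}\binom{2s-e}{s-e}\;=\;\frac{2h+1}{2(h+s)+1}\binom{2(h+s)+1}{s}.
\end{equation*}
This identity is true (I checked several values, e.g.\ $h=1,s=2$ gives $2+4+3=9=\frac37\binom{7}{2}$), and your individual ingredients (the $\binom{2h_k+e-1}{e}$ multiset count for up-steps inserted before the final old down-step, and the ballot number for the tail from height $e$ to $0$) are correct, as is the observation that the count depends only on $h_k$ and $s_{k+1}$. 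But you leave the identity itself as ``the main technical obstacle,'' and the cycle-lemma reinterpretation you gesture at only explains the right-hand side, not why the sum over $e$ equals it. So as written the proof is incomplete at its crux. The fix is easy and is essentially the paper's route: instead of stratifying by $e$, note that appending the $s_{k+1}$ up-steps and $s_{k+1}$ down-steps to the bottom Dyck word is equivalent to counting nonnegative bridges from height $2h_k$ to $0$ of length... more simply, absorb the bottom part into an initial altitude and apply Andr\'e's reflection once, which yields $\binom{2n}{\lambda}-\binom{2n}{\lambda-1}$ directly and makes the sum over $e$ unnecessary.
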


\begin{remark}
	Denoting consecutive relative distances of the walls by $\lambda_i := h_{i} - h_{i-1}$ for $i=1,\ldots,k+1$ the previous result can also be stated as
	\begin{align*}
		\frac{1}{2n+1} \prod_{i=1}^{k+1} \binom{2 (\lambda_1 + \ldots + \lambda_i) + 1}{\lambda_i}.
	\end{align*}
\end{remark}

\begin{proof}
	We will show this result by induction on the number of walls $k$. For $k=0$ the result is clear as we are counting two-column standard Young tableaux which are counted by Catalan numbers (for a proof see also Lemma~\ref{lem:horwalls} with $\lambda=n$).
	
	Next, assume the formula has been shown for $k-1$ walls and arbitrary $n$.	
	Choose a proper filling with $k$ walls and cut the tableau at the last wall at height $h_{k}$ into two parts. The top part is a Young tableau with $2(n - h_{k})$ elements and no walls, yet labels between $1$ and $2n$. Furthermore, it has the constraint that all numbers larger than the element in the bottom right cell have to be present. This is due to the fact that all elements in lower cells must be smaller. In other words, these are the objects of Lemma~\ref{lem:horwalls} and counted by~\eqref{eq:lemhorwalls}. 
	
	The bottom part is a Young tableau with $k-1$ walls and $2 h_{k}$ elements (after proper relabelling). By our induction hypothesis this number is equal to
	\begin{align*}
		\frac{1}{2h_{k}+1} \prod_{i=1}^{k} \binom{2 h_i + 1}{h_i - h_{i-1}}.
	\end{align*}
	As a final step, we rewrite Formula~\eqref{eq:lemhorwalls} into
	\begin{align*}
		\frac{2(n-\lambda)+1}{2n+1} {\binom{2n+1}{\lambda}},
	\end{align*}
	and set $\lambda := n-h_k$. Multiplying the last two expressions then shows the claim.
\end{proof}

\begin{remark}
	Note that so far we have not found a direct combinatorial interpretation of this formula.
	However, note that in general $\binom{2n+1}{\lambda}$ does not have to be divisible by $2n+1$.
\end{remark}

Let us now also give the general formula for $n \times m$ Young tableaux with walls of lengths $m-1$ from columns $1$ to $m-1$, i.e., a hole in column $m$ and nowhere else in a row with walls. Before we state the result, let us define for integers $n,k$ the falling factorial $(n)_k := n (n-1) \cdots (n-k+1)$ and for integers $n,m_1,\ldots,m_k$  such that $n \geq m_1 + \cdots + m_k$ 
the (shortened) multinomial coefficient\footnote{In the literature, one more often finds the notation 
$\binom{n}{m_1,m_2,\ldots,m_k, n-m_1-\ldots -m_k} := \frac{n!}{m_1!m_2!\cdots m_k! (n-m_1-\ldots -m_k)!}$. 
But we opted in this article for a more suitable notation to the eyes of our readers!}
$\binom{n}{m_1,m_2,\ldots,m_k} := \frac{n!}{m_1!m_2!\cdots m_k! (n-m_1-\ldots -m_k)!}$. 

\begin{theorem}
	The number of $n \times m$ Young tableaux of size $mn$ with $k$ walls from column $1$ to $m-1$ at heights $0<h_i<n$, $i=1,\ldots,k$ with $h_i < h_{i+1}$ is equal to
	\begin{align*}
		\frac{(m-1)!}{(mn+m-1)_{m-1}} 
			\left( \prod_{i=1}^{k+1} \prod_{j=1}^{m-2} \binom{\lambda_i + j}{j }^{-1} \right) 
			\left( \prod_{i=1}^{k+1} \binom{m (\lambda_1 + \ldots \lambda_i) + m-1}{\lambda_i, \ldots,\lambda_i } \right),
	\end{align*}
	where $\lambda_i := h_i - h_{i-1}$ and the $\lambda_i$'s in the multinomial coefficients appear $m-1$ times.
\end{theorem}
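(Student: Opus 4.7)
The plan is to proceed by induction on the number of walls $k$, generalising the proof of the preceding theorem (the $m = 2$ case). For the base case $k = 0$ one takes $\lambda_1 = n$ and verifies, by routine factorial manipulation, that the claimed formula coincides with the hook-length (Frame--Robinson--Thrall) product $(mn)! \prod_{j=0}^{m-1} j! / \prod_{j=0}^{m-1} (n+j)!$ giving the number of rectangular $n \times m$ standard Young tableaux.

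For the inductive step, I would cut a proper filling at the last wall, at height $h_k$, into a top block of shape $(n - h_k) \times m$ (no walls, with labels drawn from $\{1, \ldots, mn\}$) and a bottom block of shape $h_k \times m$ carrying the remaining $k - 1$ walls at heights $h_1 < \cdots < h_{k-1}$. After relabelling its entries bijectively to $\{1, \ldots, m h_k\}$, the bottom block is enumerated by the induction hypothesis with $h_k$ in place of $n$. The task is therefore to count admissible top blocks, which is the generalisation of Lemma~\ref{lem:horwalls}.

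The key observation for counting the top block is that column $m$ carries no wall and the bottom block is itself a genuine standard Young tableau: hence every bottom entry $A(i,j)$ satisfies $A(i,j) \leq A(i,m) \leq A(h_k, m) < c$, where $c$ is the bottom-right entry of the top block. Thus the top block must contain all labels in $\{c, c+1, \ldots, mn\}$. Appending the (necessarily smaller) unused labels in increasing order to column $m$ places the top block in bijection with standard Young tableaux of partition shape $(mn - (m-1)\lambda_{k+1}, \lambda_{k+1}^{\,m-1})$. Applying the hook-length formula to this partition and regrouping the factorials will produce the top-block count
\begin{align*}
T = \frac{(m h_k + m - 1)_{m-1}}{(mn + m - 1)_{m-1}} \prod_{j=1}^{m-2} \binom{\lambda_{k+1} + j}{j}^{-1} \binom{mn + m - 1}{\lambda_{k+1}, \ldots, \lambda_{k+1}}.
\end{align*}
Multiplying $T$ by the inductive bottom-block count, the factor $(m h_k + m - 1)_{m-1}$ telescopes between $T$'s numerator and the denominator of the induction hypothesis, and the finite products over $i = 1, \ldots, k$ absorb the new $i = k+1$ factors coming from $T$, producing exactly the claimed formula.

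The main obstacle I expect is the hook-length computation for the partition $(mn - (m-1)\lambda_{k+1}, \lambda_{k+1}^{\,m-1})$: one must track three families of hooks (those in the first row whose columns also pass through the $m-1$ rows below, those in the first row beyond column $\lambda_{k+1}$ that have no leg contribution, and those in the $m-1$ shorter rows) and reassemble the resulting factorial product into the specific multinomial and inverse-binomial shape required by the theorem. Once this identity is established, the inductive telescoping is routine.
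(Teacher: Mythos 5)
Your proposal is correct and follows exactly the route the paper intends: the paper's own proof is only a two-line sketch (derive an extension of Lemma~\ref{lem:horwalls} via the hook-length formula, then multiply and observe the telescoping factor), and your induction with the cut at the last wall, the reduction of the top block to standard Young tableaux of shape $\bigl(mn-(m-1)\lambda_{k+1},\lambda_{k+1}^{m-1}\bigr)$, and the cancellation of $(mh_k+m-1)_{m-1}$ against the denominator of the induction hypothesis is a faithful filling-in of that sketch. I checked that your claimed top-block count $T$ agrees with the hook-length product for that partition and that the base case reduces to the Frame--Robinson--Thrall formula, so no gap remains.
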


\begin{proof}[Proof (Sketch).]
	First derive an extension of Lemma~\ref{lem:horwalls} proved by the hook-length formula and then compute the product. Note that this gives a telescoping factor, giving the first factor.
\end{proof}

Just as one more example, here is a more explicit example of what it gives.
\begin{corollary}
	The number of $n \times 4$ Young tableaux with $k$ walls from column $1$ to $3$ at heights $0<h_i<n$, $i=1,\ldots,k$ with $h_i < h_{i+1}$ is equal to
	\begin{align*}
		\frac{6}{(4n+3)(4n+2)(4n+1)} 
			\left( \prod_{i=1}^{k+1} \frac{2}{(\lambda_i + 1)^2 (\lambda_i+2)} \right) 
			\left( \prod_{i=1}^{k+1} \binom{4 (\lambda_1 + \ldots \lambda_i) + 3 }{\lambda_i, \lambda_i, \lambda_i } \right),
	\end{align*}
	with $\lambda_i := h_i - h_{i-1}$.
\end{corollary}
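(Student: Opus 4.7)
The plan is to specialize the preceding Theorem to the case $m=4$ and to simplify each of its three factors separately. Since the general result is already available, no new combinatorial argument is required; the corollary reduces to a direct substitution followed by elementary algebra.

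Executing the substitution: the leading prefactor becomes $(m-1)!/(mn+m-1)_{m-1} = 3!/[(4n+3)(4n+2)(4n+1)] = 6/[(4n+3)(4n+2)(4n+1)]$, which matches the first factor of the corollary. The inner product over $j$ now runs from $1$ to $m-2 = 2$, so for each $i$ one obtains
$$\prod_{j=1}^{2}\binom{\lambda_i+j}{j}^{-1} = \frac{1}{\lambda_i+1}\cdot\frac{2}{(\lambda_i+1)(\lambda_i+2)} = \frac{2}{(\lambda_i+1)^2(\lambda_i+2)},$$
which is the second factor. Finally, the multinomial coefficient $\binom{m(\lambda_1+\cdots+\lambda_i)+m-1}{\lambda_i,\ldots,\lambda_i}$ carries $m-1 = 3$ repetitions of $\lambda_i$, and so becomes $\binom{4(\lambda_1+\cdots+\lambda_i)+3}{\lambda_i,\lambda_i,\lambda_i}$, exactly as stated. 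Taking the product of the three simplified pieces reproduces the claimed formula.

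The only subtlety — hardly an obstacle — is to keep track of the paper's shortened multinomial convention, in which the final implicit entry $4(\lambda_1+\cdots+\lambda_i)+3-3\lambda_i$ is suppressed. As a sanity check I would specialize further to $k=0$ (so $\lambda_1 = n$): the formula collapses to $12\,(4n)!/[n!\,(n+1)!\,(n+2)!\,(n+3)!]$, which is precisely the hook-length count of standard Young tableaux of shape $n\times 4$, confirming that the arithmetic has been carried out correctly.
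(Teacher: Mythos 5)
Your proposal is correct and matches the paper's (implicit) argument exactly: the corollary is presented there as nothing more than the specialization $m=4$ of the preceding theorem, and your substitution, the simplification $\prod_{j=1}^{2}\binom{\lambda_i+j}{j}^{-1}=\frac{2}{(\lambda_i+1)^2(\lambda_i+2)}$, and the handling of the shortened multinomial convention are all accurate. The $k=0$ sanity check against the hook-length count $\frac{12\,(4n)!}{n!\,(n+1)!\,(n+2)!\,(n+3)!}$ is a nice touch not present in the paper.
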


Let us consider some other special cases. For example, consider tableaux with walls between every row and a hole in the last column. For this case we set $\lambda_i =1$ for all $i$. This gives the general formula
	$\frac{(mn)!}{n! (m!)^n}$,
for $n \times m$ tableaux, see \oeis{A001147} for $m=2$ and \oeis{A025035} to \oeis{A025042} for the special cases $m=3,\ldots,10$.

Now that we gave several examples of closed-form formulas enumerating some families 
of Young tableaux with local decreases, we go to harder families 
which do not necessarily lead to a closed-form result. 
However, we shall see that we have a generic method to get 
useful alternative formulas (based on recurrences), also leading to an efficient 
uniform random generation algorithm.

\pagebreak
\section{The density method, D-finiteness, random generation}

\def\f{p}
\def\Ln{f_n}
\def\P{{\mathcal P}}

In this section, we present a generic approach which allows us to enumerate and generate
any shape involving some walls located at periodic positions. To keep it readable, 
we illustrate it with a specific example (without loss of generality). 

So, we now illustrate the method on the case of a $2n \times 3$ tableau where we put walls on the right and on the left column at height
$2k$ (for  $1\leq k\leq n-1$), see the leftmost tableau in Figure~\ref{fig3}. 
In order to have an easier description of the algorithm (and more compact formulas), 
we generate/enumerate first similar tableaux with an additional cell at the bottom of  the middle column, 
see the middle tableau in Figure~\ref{fig3}: It is a polyomino $\Pol_n$ with $6n+1$ cells.
There are trivially $(6n+1)!$ fillings of this polyomino with the numbers $1$ to $6n+1$.
Some of these fillings are additionally satisfying the classical constraints of Young tableaux
(i.e., the labels are increasing in each row and each column), with some local decreases allowed between cells separated by a wall 
(as shown with bold red edges in Figure~\ref{fig3}). Let $\Ln$ be the number of such constrained fillings. 

To compute $\Ln$ we use a generic method which we call the {\em density method}, which 
we introduced and used in~\cite{Marchal14,Marchal16,BanderierMarchalWallner18}. 
It relies on a geometric point of view of the problem: 
Consider the hypercube $[0,1]^{6n+1}$  and associate to each coordinate
a cell of $\Pol_n$. To almost every element $\alpha$ of  $[0,1]^{6n+1}$ (more precisely, every element with all coordinates distinct) we can associate a filling of $\Pol_n$: Put $1$ into the cell of $\Pol_n$ corresponding
to the smallest coordinate of $\alpha$, $2$  into the cell of $\Pol_n$ corresponding
to the second smallest coordinate of $\alpha$ and so on. The reverse operation 
associates to every filling of  $\Pol_n$ a region of  $[0,1]^{6n+1}$ (which is actually a polytope). 
We call  $\P$ the set of all polytopes corresponding to correct fillings of $\Pol_n$ (i.e., respecting the order constraints).
This $\P$ is also known as the  ``order polytope'' in poset theory.

Let us explain how the density method works.
It requires two more ingredients. 
The first one is illustrated in Figure~\ref{fig3}: It is a generic building block with $7$ cells with names 
X,Y,Z,R,S,V,W. We put into each of these cells a number from $[0,1]$, which we call $x,y,z,r,s,v,w$, respectively. 
The second ingredient is the sequence of polynomials $\f_n(x)$, defined by the following recurrence  
(which in fact encodes the full structure of the problem, building block after building block):
\begin{equation} \label{rec}
\f_{n+1}(z)=\int_0^z \int_x^z  \int_0^y \int_r^z  \int_z^1 \int_y^w \f_n(v) \, dv \,dw\, ds\, dr\, dy\, dx, \text{\qquad with $\f_0=1$.}
\end{equation}

The fact that this sequence of nested integrals encodes the full structure of the problem (i.e.~all the inequalities) is  better stressed with the following writing:
\begin{equation} \label{recbis}
\f_{n+1}(z)=\int_{0<x<z} \, \int_{x<y<z} \, \int_{0<r<y} \,\int_{r<s<z} \, \int_{z<w<1} \,\int_{y<v<w}  \f_n(v) \, dv \,dw\, ds\, dr\, dy\, dx, \text{\qquad with $\f_0=1$.}
\end{equation}

\begin{figure}[h]
	\begin{center}
\def\w{.14}
\begin{tabular}{cc}
\begin{tabular}{c}
		\includegraphics[width=\w\textwidth]{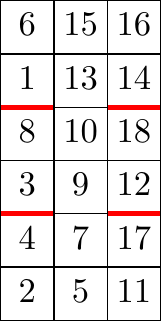} 
\end{tabular}
& \qquad
\begin{tabular}{c}
\includegraphics[width=\w\textwidth]{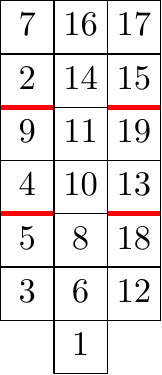}
\end{tabular}
\qquad
\begin{tabular}{c}
		\includegraphics[width=\w\textwidth]{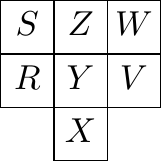}
\end{tabular}
\end{tabular}
\caption{{\bf Left:} A $2n\times 3$ Young tableau with walls. 
{\bf Centre:} Our algorithm first generates a related labelled shape, $\Pol_n$, with one more cell in its bottom
(removing this cell and relabelling the remaining cells gives the left tableau).
{\bf Right:} The ``building block'' of 7 cells. Each polyomino $\Pol_n$ is made of the overlapping of $n$ such building blocks.}
		\label{fig3}%
	\end{center}%
\end{figure}

Let us now give a more algorithmic presentation of our method:

\bigskip

\fbox{\begin{minipage}{15cm}
{\bf Density method algorithm}
\begin{itemize}
\item[1] Initialization:
We order the building blocks from $k=n-1$ (the top one) 
to $k=0$ (the bottom one). We start with the value $k:=n-1$, i.e.~the building block from the top.
Put into its cell $Z$ a random number $z$ with density $\f_n(z)/\int_0^1 \f_n(t)\, dt$.
We repeat the following process until $k=0$:
\item[2] Filling: 
Now that $Z$ is known, put into the cells $X,Y,R,S,V,W$ random numbers $x,y,r,s,v,w$ with conditional density 
$$g_{k,z}(x,y,r,s,v,w):=\frac{1}{\f_{k+1}(z)} \f_k(x){\IdP}_z,
$$
where ${\IdP}_z$ is the indicator function of the $k$-th building block (with value $z$ in cell Z):
$${\IdP}_z:={\bf 1}_{\{0\leq x\leq y \leq z, 0\leq r\leq y, r\leq s\leq z, z\leq w \leq 1, y\leq v\leq w\}}.$$
\item[3] Iteration: Consider $X$ as a the $Z$ of the next building block. Set $k:=k-1$ and go to step 2.
\end{itemize}
\end{minipage}}

\bigskip 

\noindent Next we prove that this algorithm generates Young tableaux with walls uniformly and determine its cost. 

\begin{theorem}
The density method algorithm is a uniform random generation algorithm
with quadratic time complexity and linear space complexity.
\end{theorem}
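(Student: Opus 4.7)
I would establish uniformity via the order-polytope viewpoint sketched in the text, and then bound the complexity by tracking the degrees of the polynomials $\f_k$.

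For uniformity, the key observation is that the rank map, sending $\alpha \in [0,1]^{6n+1}$ to the filling of $\Pol_n$ determined by the order of its coordinates, pushes the Lebesgue measure on the order polytope $\P$ forward to the uniform measure on valid fillings of $\Pol_n$; it therefore suffices to show that the algorithm samples $\alpha$ uniformly on $\P$. I would prove by induction on $n$ that $\f_n(z)$ equals, up to the constant $\Ln/(6n+1)!$, the marginal density at the top $Z$-cell of a uniform point of $\P$. The inductive step reads the recurrence~(\ref{recbis}) as Fubini's theorem applied to the top building block: fixing $Z=z$, the six nested integrals encode exactly the six inequalities of a single building block on its free cells, while the factor $\f_n$ inside is, by induction, the marginal density at the top $Z$-cell of the sub-polyomino immediately below (whose top $Z$-cell is precisely the free cell of the current block that serves as glue). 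Granted this, Step~1 samples the top $Z$ from the correct marginal; Step~2 draws the remaining cells of the current block from the correct conditional density $g_{k,z}$; and Step~3 identifies the glued cell with the top $Z$ of the sub-polyomino and recurses. Chaining these conditional laws yields the uniform distribution on $\P$, hence a uniformly random valid filling of $\Pol_n$ after applying the rank map.

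For the complexity, a routine induction shows that $\f_k$ is a polynomial in $z$ of degree $O(k)$, and that producing $\f_{k+1}$ from $\f_k$ via six nested integrations costs $O(k)$ arithmetic operations, so precomputing $\f_0,\dots,\f_n$ costs $O(n^2)$. Each of the $n$ iterations of the algorithm then requires sampling from a one-dimensional marginal and from a conditional density on six bounded variables, which can be done in $O(n)$ by inverting polynomial antiderivatives of degree $O(n)$. This gives $O(n^2)$ total time. Linear space is obtained by keeping only the current polynomial (and the partial sample built so far) in memory, and recomputing preceding polynomials in a streaming fashion as they are needed, instead of storing the entire table.

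The point that I expect to require the most care is the inductive identification of $\f_n(z)$ with the correct marginal density on $\P$: one has to match the gluing of consecutive building blocks, which share exactly one cell, with the way the recurrence closes on $\f_n$, ensuring that every inequality of $\Pol_n$ is counted once and only once and that the shared cell is not double-counted between two blocks. Once this accounting is in place, everything else reduces to Fubini's theorem and a routine polynomial degree/complexity analysis.
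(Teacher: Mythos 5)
Your plan is essentially the paper's proof: the paper establishes uniformity by writing down the product of the conditional densities used at the successive steps and observing that it telescopes to $\mathbf{1}_{\mathcal{P}}/\int_0^1 p_n(t)\,dt$, which is precisely your ``chaining of conditional laws'' (your marginal-density induction for $p_k$ is the same computation read in the other direction), and the complexity is justified there at the same level of detail as in your sketch, namely that each step computes and evaluates one polynomial of degree $O(n)$. One small caveat: your specific device of \emph{recomputing} earlier polynomials on the fly to save space would cost $O(k^2)$ per recomputation (the recurrence runs upward from $p_0$, not downward), pushing the time to $O(n^3)$, so it is in tension with the quadratic time bound --- but the paper itself does not spell out how linear space and quadratic time are achieved simultaneously, so this does not set you apart from the published argument.
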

\begin{proof}
Let us indeed prove that the algorithm gives a random element of our set of polytopes $\P$ with the uniform measure. 
Our algorithm yields a $(6n+1)$-tuple ${\bold x}:=(x_j,y_i,r_i,s_i,v_i,w_i, 0\leq j\leq n,0\leq i\leq n-1)$ whose density 
is the product of the conditional densities:
\begin{equation}
\frac{\f_n(x_{n})}{\int_0^1 \f_n(t)dt}\prod_{i=1}^n g_{n-i,x_{n-i+1}}(x_{n-i},y_{n-i},r_{n-i},s_{n-i},v_{n-i},w_{n-i})
\end{equation}
The crucial point is that this product is telescopic and equal to
\begin{equation}
\frac{\f_n(x_{n})}{\int_0^1 \f_n(t)dt}\prod_{k=0}^{n-1}\frac{\f_{k}(x_{k}){\IdP}_{x_k}}{\f_{k+1}(x_{k+1})}
=\frac{p_0(x_0) \IdP}{\int_0^1 \f_n(t)\, dt}  =\frac{\IdP}{\int_0^1 \f_n(t)\, dt}  \text{\qquad (as $p_0=1$)},
\end{equation}
where ${\IdP}_{x_k}$ is as in the algorithm above the indicator function of the $k$-th block
(where the local variables $x,y,r,s,v,w,z$ of the algorithm are now $x_k,y_k,r_k,s_k,v_k,w_k,z_k$)
and where the product $\IdP$ of these indicator functions is the indicator function of the full polytope (with $n$ blocks): 
$\IdP = \prod_{k=0}^{n-1}   {\IdP}_{x_k}$.

Therefore, this density is constant on our set $\P$ of polytopes and zero elsewhere, which is exactly what we wanted. The fact that it is a density
implies that its integral is 1, whence
\begin{equation}
\int_{[0,1]^{6n+1}} \IdP\, d {\bold x}
= \int_0^1 \f_n(t) \, dt.
\end{equation}
Now if we choose a random uniform element in $[0,1]^{6n+1}$, 
the probability that it belongs to our set $\P$ of polytopes is
\begin{equation}
\int_{[0,1]^{6n+1}} \IdP\, d {\bold x}.
\end{equation}
But due to the reasoning above, this is also the probability 
that a random uniform filling of our building block is correct (i.e., respects the order constraints).  Thus this probability is given by $ \int_0^1\f_n(t)dt/(6n+1)!$. 

This implies that $\Ln=(6n+1)!  \int_0^1 \f_n(t)dt$. 

Finally, as each step relies on the computation and the evaluation of the associated polynomial $\f_n(z)$ 
(of degree proportional to $n$), this gives a quadratic time complexity, and takes linear space.
\end{proof}

{\bf Remark 1:}
If one wants to generate many diagrams and not just one, then it is valuable 
to make a precomputation phase computing and storing all the polynomials $p_n$. 
The rest of the algorithm is the same. For each new object generated, this is saving $O(n^2)$ time, to the price of $O(n^2)$ memory.
The algorithm is globally still of quadratic time complexity (because of the evaluation at each step of $p_k(x)$,
while $p_{k+1}(z)$ was already evaluated).

\smallskip

{\bf Remark 2:} If one directly wants to generate $2n \times  3$ Young tableaux with decreases
instead of our strange polyomino shapes $\Pol_n$, then one still uses the same relation between 
$\f_{n+1}$ and $\f_n$ but $\f_0$ is not defined and $\f_1$ has a more complicated form.
Another way is to generate $\Pol_n$, and to reject all the ones not having a $1$ in the bottom cell,
then to remove this bottom cell and to relabel the remaining cells from $1$ to $6n$ (see Figure~\ref{fig3}).
This still gives a fast algorithm of $O(n^2)$ time complexity 
(the only difference being the cost of the initial algorithm which is the multiplicative constant included in the big-O).

\smallskip

Using dynamic programming or clever backtracking algorithms allows hardly 
to compute the sequence $f_n$ (the number of fillings of the diagram) for $n\geq 3$.
In the same amount of time, the density method 
allows us to compute thousands of coefficients 
via the relation $f_n=(6n+1)! \int_{0}^1 p_n(z)$, where the polynomial $p_n(z)$ is computed via the recurrence
\vspace{-2mm}
\begin{equation}\label{example1}
p_{n+1}(z) = \int_{0}^z \frac{1}{24} (z-1)(x-z) (3x^3-7x^2 z-x z^2-z^3-2 x^2+4 x z+4 z^2) p_n(x) \, dx.
\end{equation}

This gives the sequence $\{f_n\}_{n\geq 0}$:\\
{\small \{1, 12, 8550, 39235950, 629738299350, 26095645151941500, 2323497950101372223250, 392833430654718548673344250, 
 115375222087417545717234273063750, 55038140590519890608190921051205837500, 40460077456664688766902540022810130044068750,
4393840235884118464495128448703896167747914784375, \dots\}.
}
\newline As far as we know, there is no further simple expression for this sequence. This concludes our analysis of the model 
given by Figure~\ref{fig3}.

\smallskip
We can additionally mention that the generating function associated to the sequence of polynomials $p_n(x)$ 
has a striking property:
\begin{theorem}
The generating function $G(t,z):= \sum_{n\geq 0} p_n(z) t^n$
is D-finite\footnote{A function $F(z)$ is D-finite if it satisfies a linear differential equation, 
with polynomial coefficients in $z$. See e.g.~\cite{FlajoletSedgewick09} for their role in enumeration and asymptotics
of combinatorial structures.}  in $z$. 
\end{theorem}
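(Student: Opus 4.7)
The key observation is that the recurrence~(\ref{example1}) has a \emph{polynomial} kernel: it has the form $p_{n+1}(z)=\int_0^z K(x,z)\,p_n(x)\,dx$ with $K(x,z)\in\mathbb{Q}[x,z]$. Multiplying by $t^{n+1}$ and summing over $n\geq 0$ turns this into a single integro-functional equation for the generating function,
\begin{equation*}
G(t,z) \;=\; 1 + t\int_0^z K(x,z)\,G(t,x)\,dx.
\end{equation*}
The strategy is to convert this equation into a finite-dimensional first-order linear ODE system in $z$ with polynomial coefficients, and then invoke the standard fact that the coordinates of the solution of such a system are D-finite.

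To do so, I would expand $K(x,z)=\sum_{i=0}^{d}c_i(z)\,x^i$ (here $d=3$) and introduce the auxiliary functions
\begin{equation*}
H_i(t,z) \;:=\; \int_0^z x^i\,G(t,x)\,dx,\qquad i=0,1,\dots,d.
\end{equation*}
The functional equation becomes $G=1+t\sum_{i=0}^{d} c_i(z)H_i$, while differentiating in $z$ yields $\partial_z H_i = z^i G$. Substituting the first relation into the second closes the system:
\begin{equation*}
\partial_z H_i \;=\; z^i + t\sum_{j=0}^{d} z^i\, c_j(z)\,H_j,\qquad H_i(t,0)=0,\quad i=0,\ldots,d,
\end{equation*}
a first-order linear ODE system for $(H_0,\dots,H_d)$ with coefficients in $\mathbb{Q}[t,z]$.

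Such a system has D-finite solutions: augmenting the unknowns with the constant function $1$ produces a homogeneous $(d+2)\times(d+2)$ system whose solution module is stable under $\partial_z$ and of $\mathbb{Q}(t,z)$-dimension at most $d+2$, so each $H_i$ satisfies a linear ODE in $\partial_z$ of order at most $d+2$ with coefficients in $\mathbb{Q}(t,z)$. Since D-finite functions (in $z$) are closed under sum and multiplication by polynomials in $z$, the function $G=1+t\sum_{i=0}^{d} c_i(z)H_i$ is itself D-finite in $z$. The main work here is bookkeeping rather than a genuine obstacle: the essential input is the polynomiality of the kernel $K$, which forces the family of iterated integrals to live in a finite-dimensional module; everything else is standard closure for D-finite functions. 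An explicit ODE for $G$ could then be extracted by hand, or more comfortably with a computer algebra system.
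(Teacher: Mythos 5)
Your proof is correct, but it takes a genuinely different route from the paper's. The paper starts from the same general scheme $p_{n+1}(z)=\int_0^z Q(x,z)\,p_n(x)\,dx$ (its Formula~\eqref{rec2}) and exploits polynomiality of the kernel in the \emph{other} variable: letting $d=\deg_z Q$, it applies $\partial_z^{d+1}$ to both sides, which annihilates the integral term (only boundary contributions from the upper limit survive) and yields an explicit linear relation between $\partial_z^{d+1}p_{n+1}$ and the first $d+1$ derivatives of $p_n$; multiplying by $t^{n+1}$ and summing over $n$ then gives a single linear ODE for $G$ in $z$ of order $d+1$ directly. You instead expand the kernel in powers of $x$, introduce the moments $H_i=\int_0^z x^i G(t,x)\,dx$, and close a first-order linear system of size $\deg_x K+2$, concluding by the standard finite-dimensional-module argument for such systems together with closure properties of D-finite functions. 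Both arguments are sound, and both hinge on the one essential input --- the kernel is a polynomial, which the paper guarantees via the ``one hole between the walls'' hypothesis --- so your argument generalizes exactly as far as the paper's does. The paper's version is somewhat more elementary and self-contained (no appeal to the theory of first-order systems) and produces an explicit ODE whose order is controlled by $\deg_z Q$, whereas yours gives an order bound controlled by $\deg_x K$; depending on the kernel either bound may be the better one. One trivial slip: for the concrete kernel of~\eqref{example1} the degree in $x$ is $4$, not $3$, but this has no bearing on the validity of the argument.
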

\begin{proof}
The general scheme  (whenever one has one hole between the walls) is
\begin{equation}\label{rec2}
p_{n+1}(z) = \int_{0}^z Q(x,z) p_n(x) \, dx,
\end{equation}
where $Q$ is a polynomial in $x$ and $z$, given by $Q(x,z):=\int_{\P_z} 1$. 
The fact that there is just one hole between the walls guarantees that all the other 
variables encoding the faces of the polytope $\P_z$ will disappear in this integration.
Let $d$ be the degree of $Q$  in $z$,  
applying $\frac{\partial^{d+1}}{\partial z^{d+1}}$  to both sides of Formula~\ref{rec2} leads to a relation 
between the $(d+1)$-st derivative of $p_{n+1}$ and the first $(d+1)$ derivatives of $p_n$. 
Multiplying this new relation by $t^{n+1}$ and summing over $n\geq 0$ leads to the D-finite equation for $G(t,z)$.
\end{proof}

Note that $G(t,z)$ is D-finite in $z$, but is (in general) not D-finite in $t$.
When it is D-finite in $t$, our algorithm has a better complexity (namely, a $O(n^{3/2})$ time complexity), 
because it is then possible to evaluate $p_n(z)$ in time $O(\sqrt n  \ln n)$ instead of $O(n)$.
See~\cite[Chapter 15]{BostanChyzakGiustiLebretonLecerfSalvyEtAl17} for more details on these complexity issues.

\section{Conclusion}
We presented a new way to enumerate and generate
Young tableaux with local decreases (and, more generally, linear extensions of posets).
Our approach is different from the classical way to generate Young tableaux
(e.g.~via the Greene--Nijenhuis--Wilf algorithm, see~\cite{GreeneNijenhuisWilf84}), 
which relies on the existence of an enumeration by a simple product formula (given by the hook-length formula). 
As there is no such simple product formula for the more general cases we considered here, 
 such an approach cannot work anymore.
Obviously, in order to generate these objects, there is the alternative to use some  naive ``brute-force-like'' methods 
(like e.g.~dynamic programming with backtracking). However this leads to an exponential time algorithm.
The density method which we presented here is the only method we are aware of 
which leads to a quadratic cost uniform random generation algorithm.

It would be a full project to examine many more families 
 of Young tableaux with local decreases, to check which ones lead to nice generating functions, 
to give bijections, and so on. This article presented three different approaches 
to handle them: bijections, hook-length-like formulas, and the density method. 
Let us emphasize again that the last one is of great generality.
We will give more examples in the long version of this article.

\bigskip
{\bf Acknowledgments}: 
This work was initiated during the postdoctoral position 
of Michael Wallner at the University of Paris Nord, in September-December 2017, thanks a MathStic funding.
The subsequent part of this collaboration was funded by the Erwin Schr{\"o}dinger Fellowship
of the Austrian Science Fund (FWF):~J~4162-N35.

\bibliographystyle{alpha} 
\newcommand{\etalchar}[1]{$^{#1}$}

\end{document}